\documentclass[11pt]{article}
\usepackage[utf8]{inputenc}
\usepackage[backend=biber, style=alphabetic, maxbibnames=99]{biblatex}
\usepackage[margin=1in]{geometry}
\usepackage{amsmath, amsthm, amsfonts, amssymb, braket, complexity, comment}
\usepackage{hyperref}
\usepackage{cleveref}
\usepackage{authblk}

\hypersetup{
    colorlinks=true,
    citecolor=cyan,
    linkcolor=magenta,
    }

\addbibresource{DQI.bib}

\allowdisplaybreaks

\newcommand{\defeq}{\stackrel{\mathrm{\scriptscriptstyle def}}{=}}

\DeclareMathOperator{\unit}{unit}
\newcommand{\maxcut}{MaxCut}

\theoremstyle{definition}

\newtheorem{lemma}{Lemma}
\newtheorem{theorem}{Theorem}
\newtheorem{corollary}{Corollary}
\newtheorem{algorithm}{Algorithm}
\newtheorem{problem}{Problem}
\newtheorem{fact}{Fact}
\newtheorem{remark}{Remark}

\begin{document}
\title{No Quantum Advantage in Decoded Quantum Interferometry for MaxCut}
\author{Ojas Parekh\thanks{\href{mailto:odparek@sandia.gov}{odparek@sandia.gov}}}
\affil{Quantum Algorithms and Applications Collaboratory,\\Sandia National Laboratories,\\Albuquerque, NM, USA}
\date{}
\maketitle

\begin{abstract}
Decoded Quantum Interferometry (DQI) is a framework for approximating special kinds of discrete optimization problems that relies on problem structure in a way that sets it apart from other classical or quantum approaches. We show that the instances of MaxCut on which DQI attains a nontrivial asymptotic approximation guarantee are solvable exactly in classical polynomial time. We include a streamlined exposition of DQI tailored for MaxCut that relies on elementary graph theory instead of coding theory to motivate and explain the algorithm.
\end{abstract}

\section{Introduction}
\maxcut{} is a canonical constraint satisfaction problem that has sparked diverse avenues of research. \maxcut{} has also become a testbed problem for quantum-algorithmic approaches for solving or approximating discrete optimization problems, such as quantum annealing and the Quantum Approximate Optimization Algorithm (QAOA)~\cite{FarhiEtAl2014QAOA}. Although approximation guarantees can be derived for the QAOA on \maxcut{}, rigorous approximation advantages remain elusive despite over a decade of extensive study.
A quantum advantage for approximating general instances of \maxcut{} seems unlikely, since we can obtain a $0.878...$-approximation in classical polynomial time~\cite{GoemansWilliamson1995MaxCut}, and a $(0.878...+\varepsilon)$-approximation is \NP-hard under the Unique Games Conjecture~\cite{KhotEtAl2007MaxCut}. This suggests that it is hard for even efficient quantum algorithms to improve upon this classical threshold. 

For ``natural'' discrete optimization problems, why would we \emph{not} expect that there is an $\alpha$ such that an $\alpha$-approximation is possible in classical polynomial time and an $\alpha+\epsilon$-approximation is \NP-hard? What kind of problems would admit a gap between classical approximability and \NP-hardness of approximability, for a quantum advantage to possibly sneak in? In other words, why would a ``natural'' classical discrete optimization problem need a quantum algorithm to complete the story of its approximability?

A recent example of an unexpected provable exponential quantum approximation advantage is for the directed version of \maxcut{} \cite{KallaugherEtAl2025QuantumStreaming,KallaugherEtAl2024StreamingMaxCut}, which is arguably a natural problem. However, in this case the quantum advantage is enabled by considering space advantages in the streaming model instead of more traditional speedups. More broadly, perhaps certain kinds of restricted instances or regimes of natural discrete optimization problems are amenable to quantum approximation advantages. 

Decoded Quantum Interferometry (DQI) potentially presents such an opportunity, since it relies heavily on special problem structure in a way that sets it apart from other classical or quantum approaches~\cite{JordanEtAl2024OptimizationDQI}. DQI is designed to work on constraint satisfaction problems for which an associated decoding problem is efficiently classically solvable. Moreover, it is based on Quantum Fourier Sampling~\cite{FeffermanUmans2016QFS}, which does enable provable or conjectured quantum advantages in many settings. DQI is an exciting development, and there are some examples of problems for which DQI gives a better approximation guarantee than the best classical algorithm currently known. However, the demands that DQI puts on problem structure results in problem regimes unusual and understudied for classical algorithms. The most prominent candidate for a DQI advantage is for Optimal Polynomial Intersection, and there is essentially one nontrivial classical approximation algorithm, that is decades old, currently known for it.

\paragraph{Contributions.} Understanding the power and limitations of DQI is thus an important challenge. We make some progress in establishing limitations for DQI, for the fundamental \maxcut{} problem. \maxcut{} is perhaps the simplest interesting special case of the Max-LINSAT problem considered by Jordan et al.~\cite{JordanEtAl2024OptimizationDQI}, allowing for streamlined presentation and analysis of DQI specialized to \maxcut{}. 

We show that DQI for \maxcut{} exhibits perhaps surprising behavior: on instance families where DQI gets \emph{any} kind of nontrivial approximation guarantee, it is not just that the approximation guarantee can be beaten classically---such instances can be solved \emph{exactly} in classical polynomial time. To derive this result, we concretely demonstrate how the structure necessary for DQI can also be exploited classically. Our observations and classical algorithms are relatively simple, and \maxcut{} may be a useful guidepost for future DQI work. 

DQI is generally not guaranteed to run in polynomial time and requires both feasibility of and an efficient algorithm for a related classical decoding problem. This limits both the problems and types of instances amenable to DQI. In contrast, DQI runs efficiently on any instance of \maxcut{}, since the decoding problem reduces to a perfect matching problem.

The key graph parameter for analysis of DQI for \maxcut{} is the \emph{girth} of the unweighted input graph, which is the length of a shortest cycle. We assume the input comes from an infinite family of graphs with $g$, $n$, and $m$ corresponding to girth, number of vertices, and  number of edges. Based on the more general analysis of DQI in~\cite{JordanEtAl2024OptimizationDQI}, we first observe:
\newtheorem*{theorem:main1}{\Cref{thm:DQI-needs-high-girth}}
\begin{theorem:main1} If the expected number of edges cut in a graph by DQI is $(\frac{1}{2}+\varepsilon)m$, for some constant $\varepsilon > 0$, the girth $g$ must be linear in $m$.
\end{theorem:main1}
The trivial random assignment algorithm for \maxcut{} is expected to cut $\frac{1}{2}m$ edges, so the above theorem says that any improvement in the constant factor requires instances with very high girth. We show how this can be exploited classically:
\newtheorem*{theorem:main2}{\Cref{thm:cycles}}
\begin{theorem:main2}For a connected graph with girth $g$ linear in $m$, the quantity $m-n+1$ is constant.
\end{theorem:main2}
\maxcut{} can be solved exactly in classical polynomial time in this regime through a variety of approaches, and even the simple algorithm of cutting every edge in a spanning tree is guaranteed to cut $m-O(1)$ edges, since $m-n+1$ is precisely the number of edges not in the spanning tree.

\paragraph{Related work.} Patamawisut et al.\ describe a circuit implementation of DQI for \maxcut{}, using a different decoder based on Gauss-Jordan elimination instead of the aforementioned exact decoder based on perfect matching~\cite{PatamawisutEtAl2025CircuitDQI}. While Jordan et al.\ focus on more general cases of Max-LINSAT than \maxcut{}, they do observe a limitation on the performance of DQI for random instances of Max-$2$-XORSAT, which corresponds to $\{-1,+1\}$-weighted \maxcut{}~\cite[Section 13.3]{JordanEtAl2024OptimizationDQI}. In concurrent work with ours, Anschuetz, Gamarnik, and Lu identify obstructions for DQI on random instances of the more general Max-$k$-XORSAT problem, suggesting that a quantum advantage through DQI is not possible on typical random instances~\cite{AnschuetzEtAl2025DQIStructure}. Another line of concurrent work, by Marwaha et al., gives evidence suggesting that the probability distribution from which DQI samples solutions may be hard to simulate classically~\cite{MarwahaEtAl2025DQIComplexity}. However, even if DQI is doing something ``inherently quantum'' that is not efficiently simulable classically, this does not directly speak to whether efficient classical algorithms can outperform DQI. In fact \maxcut{} may be an example where DQI produces a distribution that is not efficiently producible classically, but as we show, independent polynomial-time classical algorithms that outperform it do exist. 

\paragraph{Organization.} The next section motivates and explains DQI for \maxcut{} without assuming any prior knowledge of DQI. \Cref{sec:classical} establishes \Cref{thm:cycles} and demonstrates that \maxcut{} can be solved in classical polynomial time when the girth is linear. We close with a discussion in \Cref{sec:discussion} on whether extensions of DQI might be able to provide approximation advantages for \maxcut{}.

\section{Specializing Decoded Quantum Interferometry for \maxcut{}}
It will be convenient to formulate \maxcut{} as a Boolean polynomial optimization problem. We will think of a cut as partitioning an unweighted graph, $G=(V,E)$ into two pieces by assigning either a $+1$ or $-1$ to each vertex, so that points $z \in \{-1,1\}^V$ of the Boolean hypercube over $V$ correspond to cuts. We take $n \defeq |V|$, $m \defeq |E|$, and $g$ as the girth of $G$, which is the length of a shortest cycle. We implicitly assume that input instances $G$ come from an infinite family $\mathcal{G}$ so that asymptotic quantities are well defined; in particular we assume that $\lim_{m\to\infty} g/m$ exists. The number of edges cut by a $z$, called the \emph{cut value} of $z$, is then 
\begin{equation*}
    h(z) = \frac{1}{2} \sum_{ij \in E} 1 - z_i z_j.
\end{equation*}
\maxcut{} is equivalent to the Boolean polynomial optimization problem
\begin{equation*}
    \max_{z \in \{-1,1\}^V} h(z).
\end{equation*}

For the incidence vector $\gamma \in \{0,1\}^U$ of a set $S \subseteq U$, we take 
\begin{align*}
x^{\gamma} \defeq \prod_{i \in S} x_i^{\gamma_i}. 
\end{align*}
We can represent a polynomial over the Boolean hybercube,
\begin{equation*}
    p(z) = \sum_{\alpha \in \{0,1\}^V} p_{\alpha} z^{\alpha}
\end{equation*}
as a quantum state in two different ways: 
\begin{itemize}
\item Encode the values of $p$ over inputs as amplitudes
\begin{equation*}
    \ket{p} = \unit\left(\sum_{z \in \{-1,1\}^V} p(z) \ket{z}\right),
\end{equation*}
where $\ket{z} \defeq \ket{\frac{1-z_1}{2} \frac{1-z_2}{2} \cdots}$ and $\unit(\cdot)$ outputs a normalized state, or
\item Encode the coefficients of $p$ over monomials as amplitudes
\begin{equation*}
    \ket{\hat{p}} = \unit\left(\sum_{\alpha \in \{0,1\}^V} p_{\alpha} \ket{\alpha}\right).
\end{equation*}
\end{itemize}
These two encodings are related by the Boolean Fourier transform, also known as the Hadamard transform (see \cite{Odonnell2021AnalysisBooleanFunctions} for Fourier analysis of Boolean functions). Hadamard gates applied on all qubits enact this transform on a quantum computer, and we have $H^{\otimes n}\ket{p} = \ket{\hat{p}}$ and $H^{\otimes n}\ket{\hat{p}} = \ket{p}$.

\paragraph{Quantum Fourier Sampling.} The above observation gives a simple quantum algorithm, called Quantum Fourier Sampling (QFS)~\cite{FeffermanUmans2016QFS}, for trying to find a cut $z$ with relatively large value $h(z)$:
\begin{algorithm}[Quantum Fourier Sampling]\ \\
\label{alg:QFS}
\begin{enumerate}
    \item \label{item:prepare} Prepare $\ket{\hat{h}}$
    \item Produce $\ket{h} = H^{\otimes n}\ket{\hat{h}}$
    \item Measure $\ket{h}$ in the computational basis
\end{enumerate}
\end{algorithm}
This algorithm allows sampling a cut $z$ with probability proportional to $h^2(z)$, so that the cut value dictates the probability of picking a cut, as we would want. We will discuss Step \ref{item:prepare} in more detail below, and for now we note that it can be executed efficiently since $h$ is an explicit degree-2 polynomial \cite[Lemma 1]{FeffermanUmans2016QFS}.

What kind of approximation guarantee does this give for \maxcut{}? (Spoiler alert: it's not so great). The expected number of edges cut is $\frac{1}{2}m + O(1)$. This is just a slight improvement over the expectation of $\frac{1}{2}m$ edges achieved by the random assignment where each vertex is independently assigned to a side of the cut with probability $\frac{1}{2}$. 

\subsection{Decoded Quantum Interferometry}
What if we could sample cuts from a distribution proportional to some higher power or maybe even a higher degree polynomial of the objective $h$? This could amplify the probabilities of good cuts, and we might expect a better approximation guarantee. 

This is where DQI steps in. Sampling from higher degree distributions is indeed possible, but the tradeoff is an increase in complexity of preparing the corresponding state in Step \ref{item:prepare} of QFS. This leaves the questions: (i) how do we find and analyze a good polynomial of $h$, and (ii) can we prepare the corresponding state efficiently? DQI answers both of these questions.

For (i), the observation in~\cite{JordanEtAl2024OptimizationDQI} is that if we let $y_{ij} \defeq z_i z_j$, then $h$ is a symmetric polynomial in the variables $y$ since it only depends on $g(y) \defeq \sum_{ij \in E} y_{ij}$. The polynomials 
\begin{align*}
    g_k(y) \defeq \sum_{\substack{\beta \in \{0,1\}^E\\|\beta|=k}} y^\beta
\end{align*}
are a basis for the symmetric polynomials. Hence, any degree-$l$ polynomial of $h$, which is also a degree-$l$ polynomial in the variables $y$, can be expressed as 
\begin{align} \label{eq:q}
q(y) = \sum_{k=0}^l \mu_k g_k(y) = \sum_{\beta \in \mathcal{E}_l} q_\beta y^\beta = \sum_{\alpha \in \{0,1\}^V} q'_\alpha z^\alpha,  
\end{align}
where $\mathcal{E}_k \defeq \{\beta \in \{0,1\}^E \mid |\beta| \leq k\}$. The idea is then to find a good polynomial $q$ and run \Cref{alg:QFS} on it to sample a cut $z$ with probability proportional to $q^2(z)$.
The coefficients $\mu_k$ of a $q$ that maximize the expected cut value, $\mathbb{E}_{z \sim q^2}[h(z)]$, can be found classically in polynomial time by solving a maximum-eigenvector problem for an $(l+1) \times (l+1)$ tridiagonal matrix (see \Cref{lem:DQI-expectation}). In light of this, we can think of the input to DQI as the unweighted graph $G$ and the degree parameter $l$, where the approximation guarantee of DQI monotonically increases with $l$.

How large does $l$ need to be? A distribution supported only on optimal cuts is proportional to 
\begin{align*}
    h^*(y) = \prod_{i = 2}^s (h(y)-\lambda_i),
\end{align*}
where $\lambda_1 > \lambda_2 > \cdots > \lambda_s$ are the possible cut values $h(y)$ can attain. The symmetric polynomial $h^*$, which is proportional to $(h^*)^2$, has degree at most $m$ since the cut value $h(y) \in [0,m]$ for all $y$. Hence, taking $l=m$ would solve \maxcut{}, and it suffices to choose $l \leq m$. This also suggests that DQI should not run in polynomial time when $l=m$, since \maxcut{} is \NP-complete, and it is widely believed that polynomial-time quantum algorithms cannot solve \NP-complete problems.  DQI gives a better expected approximation guarantee than the best classical algorithm currently known for certain problems in the regime where $l = \Theta(n) = \Theta(m)$ (for more general constraint satisfaction problems, $n$ and $m$ are the number of variables and constraints). 

\paragraph{Preparing the initial state for QFS.}
We need to be able to prepare $\ket{\hat{q}}$ in polynomial time. We can do so readily in a Hilbert space where we have $m$ qubits corresponding to $E$. First note that $g_k$ can be represented using a Dicke state:
\begin{align*}
    \ket{\hat{g_k}}_E = \frac{1}{\sqrt{\binom{m}{k}}}\sum_{\substack{\beta \in \{0,1\}^E:\\|\beta|=k}} \ket{\beta},
\end{align*}
where the subscript on the ket identifies the set that the qubits correspond to. Then we obtain, as a superposition of Dicke states
\begin{align*}
    \ket{\hat{q}}_E = \unit\left(\sum_{k=0}^l \mu_k \sqrt{\binom{m}{k}} \ket{\hat{g_k}}\right) = \unit\left(\sum_{\beta \in \mathcal{E}_l}q_\beta \ket{\beta}\right),
\end{align*}
which can be prepared efficiently as described in~\cite{JordanEtAl2024OptimizationDQI}. However, in order to apply Quantum Fourier Sampling, we instead need the state
\begin{align*}
    \ket{\hat{q}}_V = \unit\left(\sum_{\alpha \in \{0,1\}^V}q'_\alpha \ket{\alpha}\right).
\end{align*}

How can we construct $\ket{\hat{q}}_V$ from $\ket{\hat{q}}_E$?  Since $y_{ij} = z_i z_j$, we get a natural map $f$ from $\beta \in \{0,1\}^E$ to $\alpha \in \{0,1\}^V$ as 
\begin{align*}
y^\beta = \prod_{\substack{ij \in E:\\\beta_{ij} = 1}} z_i z_j = z^{f(\beta)},
\end{align*}
where $z_i^2 = 1$ for all $i$ is employed to reduce the monomial $z^{f(\beta)}$ so that $f(\beta) \in \{0,1\}^V$. We think of $\beta$ as an incidence vector representing a subgraph (i.e., a subset of edges) and the \emph{parity vector}, $f(\beta)$ as indicating the parity of the degree of vertex in $V$ in the subgraph $\beta$ (vertices may have no incident edges). We can see that $f$ is not injective in general by letting $G$ be a cycle on three vertices, for which $f(1,1,1) = f(0,0,0) = (0,0,0)$. In other words, all vertices in $V$ have even degree in both $G$ itself and the empty subgraph. 

\begin{lemma}[specialized for \maxcut{} from \cite{JordanEtAl2024OptimizationDQI}]
\label{lem:uncompute}
Suppose that $G$ is a graph where each parity vector $\alpha \in \{0,1\}^V$ arises from at most one subgraph $\beta \in \mathcal{E}_l$. Then $f$ is injective (on the domain $\mathcal{E}_l$), and the following algorithm may be used to produce $\ket{\hat{q}}_V$.
\begin{enumerate}
    \item Prepare $\ket{\hat{q}}_E\ket{0}_V = \unit\left(\sum_{\beta \in \mathcal{E}_l}q_\beta \ket{\beta}_E\ket{0}_V\right)$
    \item Apply $f$ coherently to get: $\unit\left(\sum_{\beta \in \mathcal{E}_l}q_\beta \ket{\beta}_E\ket{f(\beta)}_V\right)$
    \item \label{item:uncompute} Uncompute the $E$ register using $f^{-1}$ to yield: $\ket{\psi} \defeq \unit\left(\sum_{\beta \in \mathcal{E}_l}q_\beta \ket{0}_E\ket{f(\beta)}_V\right)$
\end{enumerate}
\end{lemma}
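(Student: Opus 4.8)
The plan is to dispatch the injectivity claim as a restatement of the hypothesis, and then to spend the real effort verifying that the three-step algorithm is a legitimate sequence of unitaries terminating in $\ket{0}_E \otimes \ket{\hat{q}}_V$. The injectivity part is immediate: saying that each parity vector $\alpha$ arises from at most one $\beta \in \mathcal{E}_l$ is exactly the statement that $f$, restricted to $\mathcal{E}_l$, is one-to-one, so that $f^{-1}$ is a well-defined function on the image $f(\mathcal{E}_l)$. I would record this at the outset, since the availability of $f^{-1}$ on the image is precisely what makes Step \ref{item:uncompute} legal.

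Next I would track the state through Steps 1 and 2. Step 1 merely appends a fresh $\ket{0}_V$ register, so there is nothing to check beyond normalization, which is inherited from $\ket{\hat{q}}_E$. For Step 2 the key observation is that $f$ is $\mathbb{F}_2$-linear: $f(\beta)$ records the parity of each vertex degree in the subgraph $\beta$, so coordinate $\alpha_v$ is the sum mod $2$ of the $\beta_e$ over edges $e$ incident to $v$. Consequently the map $\ket{\beta}_E\ket{0}_V \mapsto \ket{\beta}_E\ket{f(\beta)}_V$ is implemented by a fixed pattern of CNOT gates (one per vertex--edge incidence) and is therefore a genuine unitary, yielding the claimed $\unit\of{\sum_{\beta \in \mathcal{E}_l} q_\beta \ket{\beta}_E \ket{f(\beta)}_V}$.

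The hard part is Step \ref{item:uncompute}, and this is exactly where injectivity does its work. Because $f|_{\mathcal{E}_l}$ is one-to-one, on the support of the state the edge register $\beta$ is a deterministic function $f^{-1}(\alpha)$ of the vertex register $\alpha = f(\beta)$. I would therefore define the $V$-controlled unitary $\ket{\cdot}_E\ket{\alpha}_V \mapsto \ket{\cdot \oplus f^{-1}(\alpha)}_E\ket{\alpha}_V$ (extending $f^{-1}$ arbitrarily, say to $0$, off the image so that the map is a genuine basis permutation) and check that it sends each term $\ket{\beta}_E\ket{f(\beta)}_V$ to $\ket{\beta \oplus \beta}_E\ket{f(\beta)}_V = \ket{0}_E\ket{f(\beta)}_V$, cleanly disentangling the two registers. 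I would emphasize that without injectivity this step is impossible: if distinct $\beta, \beta'$ shared a parity vector, the desired map would collapse both $\ket{\beta}_E$ and $\ket{\beta'}_E$ to $\ket{0}_E$ over the same $\ket{\alpha}_V$, which is many-to-one and hence not unitary.

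Finally I would identify the surviving $V$-register state with $\ket{\hat{q}}_V$. Substituting $y^\beta = z^{f(\beta)}$ into \eqref{eq:q} and collecting monomials gives $q'_\alpha = \sum_{\beta \in \mathcal{E}_l:\, f(\beta)=\alpha} q_\beta$, so as unnormalized vectors the identity $\sum_{\beta \in \mathcal{E}_l} q_\beta \ket{f(\beta)}_V = \sum_{\alpha} q'_\alpha \ket{\alpha}_V$ holds (under injectivity the inner sum has a single term, but the identity is not even sensitive to this). Since every step is norm-preserving, the normalizing constants agree, and the resulting state is $\ket{0}_E \otimes \ket{\hat{q}}_V$, as claimed. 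The only genuinely load-bearing hypothesis is injectivity, used solely to justify the reversibility of the uncomputation; the amplitude bookkeeping is automatic.
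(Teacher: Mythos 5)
Your proposal is correct and follows essentially the same route as the paper: the load-bearing step in both is the amplitude identity $q'_\alpha = \sum_{\beta \in \mathcal{E}_l:\, f(\beta)=\alpha} q_\beta$ obtained by substituting $y^\beta = z^{f(\beta)}$ into \eqref{eq:q}, which identifies the $V$-register state with $\ket{\hat{q}}_V$. The extra material you supply---the CNOT implementation of the $\mathbb{F}_2$-linear map $f$ and the explicit controlled-XOR uncomputation whose unitarity rests on injectivity---is a correct elaboration of circuit details the paper leaves implicit (deferring to \cite{JordanEtAl2024OptimizationDQI}), not a different argument.
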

\begin{proof} We see from \Cref{eq:q} that 
\begin{align} \label{eq:q'}
q'_\alpha = \sum_{\substack{\beta \in \mathcal{E}_l: \\ f(\beta) = \alpha}} q_\beta,
\end{align}
with $q'_\alpha = 0$ if there is no $\beta \in \mathcal{E}_l$ such that $f(\beta) = \alpha$. By \Cref{eq:q'}, discarding the $E$ register from $\ket{\psi}$ produces the desired state:
\begin{align*}
\unit\left(\sum_{\beta \in \mathcal{E}_l}q_\beta \ket{f(\beta)}_V\right) =
\unit\left(\sum_{\alpha \in \{0,1\}^V}q'_\alpha \ket{\alpha}\right).
\end{align*}
\end{proof}

DQI uses the algorithm of \Cref{lem:uncompute} and thus requires $f$ to be injective. What kind of structural assumption does this impose on the input $G,l$?

\begin{lemma} \label{lem:girth}
    Suppose we have a graph $G$ and $l \in \mathbb{N}$. Then $f$ is injective on $\mathcal{E}_l$ if and only if $G$ has girth at least $2l+1$.
\end{lemma}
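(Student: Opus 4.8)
The plan is to recognize $f$ as an $\mathbb{F}_2$-linear map and thereby translate ``injective on $\mathcal{E}_l$'' into a statement about the minimum Hamming weight of the cycle space of $G$. Concretely, since $f(\beta)_v = \sum_{e \ni v} \beta_e \pmod 2$, the map $f$ is exactly the mod-$2$ vertex--edge incidence map, so it is linear over $\mathbb{F}_2$ and satisfies $f(\beta \triangle \beta') = f(\beta) + f(\beta')$, where $\triangle$ denotes symmetric difference of edge sets. Its kernel $\{\beta : f(\beta) = 0\}$ is precisely the set of even subgraphs (every vertex has even degree), i.e.\ the cycle space of $G$. The single structural fact I would invoke is standard: a nonempty even subgraph contains a cycle, so the minimum weight of a nonzero kernel element equals the girth $g$.

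With this dictionary, both directions follow by contradiction. For the ``if'' direction, I assume $g \geq 2l+1$ and suppose $f(\beta_1) = f(\beta_2)$ for distinct $\beta_1, \beta_2 \in \mathcal{E}_l$. Then $\delta \defeq \beta_1 \triangle \beta_2$ is a nonempty kernel element with $|\delta| \leq |\beta_1| + |\beta_2| \leq 2l < g$. But $\delta$, being a nonempty even subgraph, must contain a cycle and hence have $|\delta| \geq g$, a contradiction; so $f$ is injective.

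For the ``only if'' direction, I assume $f$ is injective and suppose toward a contradiction that $g \leq 2l$. I take a shortest cycle $C$, with $|C| = g$, and partition its edge set into two arcs $\beta_1, \beta_2$ of sizes $\lceil g/2 \rceil$ and $\lfloor g/2 \rfloor$. Since $g \leq 2l$ forces $\lceil g/2 \rceil \leq l$, both lie in $\mathcal{E}_l$; they are distinct (disjoint and nonempty) and collide because $\beta_1 \triangle \beta_2 = C$ lies in the cycle space, giving $f(\beta_1) = f(\beta_2)$ and contradicting injectivity. Hence $g \geq 2l+1$.

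The argument is essentially routine once the linear-algebraic reformulation is in place; the only step needing care is the cycle-splitting in the second direction, where I must ensure each half has at most $l$ edges---this is exactly where the factor of $2$ in $2l+1$ enters---and that the two halves are genuinely distinct. I expect the main conceptual obstacle to be the reformulation itself, namely identifying $\ker f$ with the cycle space and recalling that its minimum nonzero weight is the girth via the ``nonempty even subgraph contains a cycle'' fact, which is what couples the shortest-cycle definition of $g$ to minimum kernel weight.
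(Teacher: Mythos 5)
Your proof is correct and follows essentially the same route as the paper: both exploit the $\mathbb{F}_2$-linearity of $f$, identify a collision with a nonempty even subgraph of at most $2l$ edges that must contain a cycle, and for the converse split a short cycle into two edge-disjoint paths of length at most $l$. The only cosmetic difference is that you phrase the kernel as the cycle space with minimum weight equal to the girth, whereas the paper argues the "nonempty even subgraph contains a cycle" step directly via a walk.
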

\begin{proof}
    First we observe that $f$ is linear when viewed as a function on $\mathbb{F}_2^E$. Let $B \in \mathbb{F}_2^{E \times V}$ be the edge-vertex incidence matrix of $G$, with $B_{e,v} = 1$ if and only if the edge $e$ is incident to the vertex $v$. Then $f(\beta) = \alpha$ if and only if $\alpha = B^T \beta,$ with $\alpha$ and $\beta$ taken as vectors over $\mathbb{F}_2$.
    
    We show that if $f$ is not injective then the girth of $G$ is at most $2l$.  Suppose there are two distinct subgraphs $\beta, \beta' \in \mathcal{E}_l$ with $f(\beta) = f(\beta')$. Then, in $\mathbb{F}_2^E$, $\gamma \defeq \beta + \beta' \not= \mathbf{0}.$ The parity vector of $\gamma$ is $\mathbf{0}$, since $f(\gamma) = f(\beta) + f(\beta') = \mathbf{0}$. This means that the nonempty subgraph corresponding to $\gamma$ has even degree at all vertices and has at most $2l$ edges. This subgraph must therefore contain a cycle of size at most $2l$. To see this, one can start at a vertex and keep moving to other vertices using previously unused edges. Since all degrees are even, it will always be possible to leave a vertex that is visited. Since the graph is finite, one must eventually revisit a vertex, resulting in a cycle.

    Now we show the opposite direction. Suppose there is a cycle with at most $2l$ edges. Break the cycle into two edge-disjoint paths of length at most $l$, both between the same pair of vertices $i$ and $j$ on the cycle. Represent these paths by incidence vectors $\beta,\beta' \in \mathcal{E}_l$. Then $f$ is not injective since $f(\beta) = f(\beta') = \alpha$, where $\alpha_i = \alpha_j = 1$ and all other entries are 0.  
\end{proof}

So for \maxcut{} there is a simple graph-theoretic characterization of the structure DQI requires. It makes sense to pick the largest $l$ such that $f$ remains injective; however, we also need an efficient classical algorithm to implement $f^{-1}$, which may limit the choice of $l$. For \maxcut{}, it turns out that $f^{-1}$ can be computed in classical polynomial time for any value of $l$, as we show in the next section. In this case $l$ can just be taken to be the largest integer so that $G$ has girth at least $2l+1$. We are now in position to present the full DQI algorithm for \maxcut{}.

\begin{algorithm}[Decoded Quantum Interferometry for \maxcut{}]\ \\
\label{alg:DQI}
\emph{Input:} unweighted graph $G = (V,E)$
\begin{enumerate}
    \item \label{item:compute-girth} 
    Compute the girth $g$ of $G$, and set $l \defeq \lfloor \frac{g-1}{2} \rfloor$
    \item \label{item:q:eigenvector} 
    Solve an $(l+1)\times(l+1)$ maximum eigenvector problem to find the coefficients of an optimal degree-$l$ symmetric polynomial $q$ in $y$
    \item \label{item:uncompute-step}
    Run the algorithm of \Cref{lem:uncompute} to prepare $\ket{\hat{q}}_V$
    \item \label{item:QFS}
    Run QFS (\Cref{alg:QFS}) on $\ket{\hat{q}}_V$ to sample a cut $z$ with probability proportional to $q^2(z)$
\end{enumerate}
\end{algorithm}

\subsection{Solving the classical portion of DQI}
Solving the following problem suffices to implement the uncompute step (Step \ref{item:uncompute}) of the algorithm of \Cref{lem:uncompute}.
\begin{problem}[DQI uncomputation for \maxcut{}]\ \\
\label{prob:decode}
\emph{Input:} A parameter $l \in \mathbb{N}$, an unweighted graph $G=(V,E)$, and a vector $\alpha \in \mathbb{F}_2^V$\\
\emph{Output:} A subgraph with at most $l$ edges, represented as $\beta \in \mathbb{F}_2^E$, whose degree parities match $\alpha$ (i.e., $|\beta| \leq l$ and $f(\beta) \defeq B^T \beta = \alpha$, where $B \in \mathbb{F}_2^{E \times V}$ is the edge-vertex incidence matrix of $G$)
\emph{Remark:} Additionally assuming that $G$ has girth at least $2l+1$ guarantees a unique solution for feasible instances by \Cref{lem:girth}.
\end{problem}

\begin{remark}[Classical decoding in DQI]
\label{rem:DQI-decoding}
DQI is originally framed in terms of a classical decoding problem for an error-correcting code. For \maxcut{} this perspective stems from the edge-incidence matrix $B$. The column space of $B$ generates the \emph{cut space} of $G$, which is the span over $\mathbb{F}_2$, of incidence vectors of cuts in $G$. Since any cycle crosses a cut an even number of times, the cut space is orthogonal to the \emph{cycle space}, generated in the same fashion by incidence vectors of cycles. This space may be viewed as a cycle code space and is characterized as incidence vectors of subgraphs of $G$ which have even degree at every vertex. The parity-check matrix of this code is $B^T$, and the girth of $G$ is the distance of the code. This code is usually called a graph code or cycle code. 

In this picture, \Cref{prob:decode} is about finding a short vector generating a given syndrome, and as observed in \cite{JordanEtAl2024OptimizationDQI} it can also be solved as a more standard maximum-likelihood decoding problem. Finally, observe the following duality: for DQI to solve an optimization problem over cuts (\maxcut{}), it must solve a kind of decoding problem over the dual cycle space (\Cref{prob:decode}). See, e.g., Section 1.9 of \cite{Diestel2025GraphTheory} for more on the cut and cycle spaces of a graph.
\end{remark}

We illustrate how \Cref{prob:decode} can be cast as a $T$-join problem~\cite{EdmondsJohnson1973ChinesePostman}, which can be solved in polynomial time by a reduction to a perfect matching problem (see, e.g., \cite[Theorem 29.1]{Schrijver2003CombinatorialOptimization}).
\begin{problem}[Minimum $T$-join]\ \\
\label{prob:T-join}
\emph{Input:} An unweighted graph $G=(V,E)$ and a subset of vertices $T \subseteq V$\\
\emph{Output:} A subgraph $H$ of $G$, with the minimum number of edges, such that set of odd-degree nodes in $H$ is precisely $T$
\end{problem}

\begin{fact}\label{lem:poly-time-solvability}
\Cref{prob:decode} is polynomial-time reducible to \Cref{prob:T-join}. 
\end{fact}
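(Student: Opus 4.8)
The plan is to give a direct reduction that identifies the degree-parity constraint of \Cref{prob:decode} with the odd-degree condition of \Cref{prob:T-join}, and then handle the cardinality bound $|\beta| \leq l$ separately. Given an instance $(l, G, \alpha)$ of \Cref{prob:decode}, the first step is to set $T \defeq \supp(\alpha) = \{v \in V : \alpha_v = 1\}$ and feed $(G, T)$ to \Cref{prob:T-join}. Computing $T$ from $\alpha$ is clearly polynomial time, so this is a valid Karp-style reduction provided the correspondence between solutions holds.

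The key observation to verify is that $\beta \in \mathbb{F}_2^E$ satisfies $B^T \beta = \alpha$ if and only if the subgraph it represents has odd degree exactly at the vertices of $T$. This follows immediately from the definition of $B$: the $v$-th coordinate of $B^T\beta$ is $\sum_{e \ni v} \beta_e \pmod 2$, which is precisely the parity of the degree of $v$ in the subgraph $\beta$. Hence, ignoring the cardinality bound, the feasible $\beta$ for \Cref{prob:decode} are exactly the $T$-joins of $G$. The two feasibility conditions likewise coincide: a $T$-join exists if and only if $T$ has even intersection with every connected component of $G$, which is exactly the condition for $\alpha$ to lie in the column space of $B^T$ (the cut space), i.e., for $B^T\beta = \alpha$ to be solvable at all.

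It remains to incorporate the constraint $|\beta| \leq l$, and this is the only point requiring care. Since a minimum $T$-join $H^*$ minimizes the number of edges over all $T$-joins, I would output (the incidence vector of) $H^*$ when $|E(H^*)| \leq l$, and otherwise report that the \Cref{prob:decode} instance is infeasible. This is correct precisely because minimizing the edge count is the right objective for the decision $|\beta| \leq l$: if even the minimum $T$-join exceeds $l$ edges, then no $\beta$ with $B^T\beta = \alpha$ can satisfy the bound. The reduction is otherwise routine and presents no substantial obstacle; I would also note in passing that under the additional girth assumption $g \geq 2l+1$ from the remark, \Cref{lem:girth} guarantees this solution is in fact unique, matching the uniqueness DQI requires.
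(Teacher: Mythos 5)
Your reduction is the same as the paper's: set $T \defeq \{v \in V \mid \alpha_v = 1\}$, solve the minimum $T$-join problem on $(G,T)$, output the resulting subgraph if it has at most $l$ edges, and otherwise declare \Cref{prob:decode} infeasible. The proposal is correct and simply spells out the parity-to-odd-degree correspondence and the role of minimality in certifying infeasibility, which the paper leaves implicit.
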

\begin{proof} Given an instance of \Cref{prob:decode}, set $T \defeq \{i \in V \mid \alpha_i = 1\}$ and solve \Cref{prob:T-join} on $G,T$. If this instance of \Cref{prob:T-join} is infeasible, or if $H$ has more than $l$ edges, then \Cref{prob:decode} is infeasible. Otherwise, take $\beta$ as the incidence vector of $H$.
\end{proof}

This is the last piece in establishing:
\begin{theorem}DQI for \maxcut{} (\Cref{alg:DQI}) is correct and runs in polynomial time on any input graph $G$.
\end{theorem}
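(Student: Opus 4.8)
The plan is to establish both claims---correctness and polynomial runtime---by verifying the four steps of \Cref{alg:DQI} in turn, assembling the lemmas already in hand. For correctness, note first that Step \ref{item:compute-girth} sets $l = \lfloor (g-1)/2 \rfloor$, which forces $g \geq 2l+1$. By \Cref{lem:girth} this is exactly the condition for $f$ to be injective on $\mathcal{E}_l$, so the hypothesis of \Cref{lem:uncompute} is met. Step \ref{item:q:eigenvector} computes the coefficients of an optimal degree-$l$ symmetric polynomial $q$ via the eigenvector problem of \Cref{lem:DQI-expectation}; Step \ref{item:uncompute-step} then applies \Cref{lem:uncompute} to produce $\ket{\hat{q}}_V$ correctly, using the injectivity just established; and Step \ref{item:QFS} runs QFS on this state to sample a cut $z$ with probability proportional to $q^2(z)$, exactly as analyzed for \Cref{alg:QFS}. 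Chaining these gives correctness.

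For the runtime, I would check that each step is polynomial in the input size $n+m$. Computing the girth in Step \ref{item:compute-girth} is standard, e.g.\ by a breadth-first search rooted at each vertex. Step \ref{item:q:eigenvector} solves a maximum-eigenvector problem for an $(l+1)\times(l+1)$ matrix with $l \leq m$, hence is polynomial. Step \ref{item:QFS} consists only of applying $H^{\otimes n}$ followed by a computational-basis measurement, which is efficient. Within Step \ref{item:uncompute-step}, preparing $\ket{\hat{q}}_E$ as a superposition of Dicke states is efficient as described in~\cite{JordanEtAl2024OptimizationDQI}, and applying $f$ coherently is efficient because $f$ is the $\mathbb{F}_2$-linear map $\beta \mapsto B^T\beta$, realizable by a circuit of CNOT gates.

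The main obstacle---and the one place where generic DQI can fail to be efficient---is the uncomputation of the $E$ register by $f^{-1}$, namely Step \ref{item:uncompute} of \Cref{lem:uncompute}. In general this is a decoding problem that need not be tractable, but for \maxcut{} it is precisely \Cref{prob:decode}, which by \Cref{lem:poly-time-solvability} reduces in polynomial time to the minimum $T$-join problem (\Cref{prob:T-join}), itself solvable in polynomial time via a reduction to perfect matching. Compiling this classical polynomial-time routine into a reversible circuit incurs only polynomial overhead, so the coherent uncomputation is efficient. With every step shown to run in polynomial time, the two claims together establish the theorem.
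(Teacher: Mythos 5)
Your proposal is correct and follows essentially the same route as the paper's proof: girth via repeated BFS, injectivity of $f$ from \Cref{lem:girth} to justify \Cref{lem:uncompute}, polynomial-time uncomputation via the reduction of \Cref{prob:decode} to the $T$-join/perfect-matching problem, and efficiency of QFS. The extra detail you supply (CNOT implementation of $f$, reversible compilation of the decoder) is consistent with, and slightly more explicit than, the paper's argument.
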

\begin{proof}
The girth of an unweighted undirected graph can be found by running $n$ breadth-first searches, one rooted at each vertex~\cite{ItaiRodeh1978MinCycle}, so Steps \ref{item:compute-girth} and \ref{item:q:eigenvector} run in classical polynomial time. Since $g \geq 2l+1$, $f$ is injective by \Cref{lem:girth}, and Step~\ref{item:uncompute-step} correctly produces $\ket{\hat{q}}_V$. This step also runs in polynomial time since \Cref{prob:decode} runs in classical polynomial time by \Cref{lem:poly-time-solvability}. QFS (Step~\ref{item:QFS}) runs in polynomial time and samples from $q^2$ as explained in~\cite{FeffermanUmans2016QFS}. The fact that Step~\ref{item:q:eigenvector} produces an optimal symmetric degree-$l$ polynomial $q$ comes from the proof of~\cite[Lemma 9.2]{JordanEtAl2024OptimizationDQI}, although this is not necessary to prove the approximation guarantee of \Cref{alg:DQI}.
\end{proof}
We now move on to considering the approximation ratio that \Cref{alg:DQI} guarantees.

\subsection{Approximation guarantee}
The expected number of edges cut by DQI beyond the random-assignment guarantee of $\frac{1}{2}m$ is captured by the maximum eigenvalue of a tridiagonal matrix: 

\begin{lemma}[Specialization of Lemma 9.2 in \cite{JordanEtAl2024OptimizationDQI} for \maxcut{}] 
\label{lem:DQI-expectation}
Let $q$ be the polynomial obtained in Step \ref{item:q:eigenvector} of \Cref{alg:DQI}. Then the expected number of edges cut by \Cref{alg:DQI} is
\begin{align} \label{eq:DQI-approx}
    \mathbb{E}_{z \sim q^2}[h(z)] = \frac{1}{2}(m + \lambda_{\max}(A^{(m,l)})), 
\end{align}    
    where
\begin{align*}
A^{(m,l)} \;\defeq\;
\begin{bmatrix}
 & a_1 &              &        &        \\
 a_1 &               & a_2 &        &        \\
                 & a_2 &              & \ddots &        \\
                 &               & \ddots       &        & a_l \\
                 &               &              & a_l &
\end{bmatrix}, 
\end{align*}
with $a_k = \sqrt{k(m-k+1)}$ for $1 \leq k \leq l$.
\end{lemma}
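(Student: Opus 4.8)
The plan is to recognize $\mathbb{E}_{z\sim q^2}[h(z)]$ as (half of $m$ minus) a Rayleigh quotient of the multiplication-by-$g_1$ operator, compressed to the span of the low-degree symmetric polynomials, and then to read off the tridiagonal matrix $A^{(m,l)}$ from a three-term recursion. As functions on the cube we have $h(z) = \frac12(m - g_1(z))$, since $g_1(z) = \sum_{ij\in E} z_i z_j$. Writing $\langle p, r\rangle \defeq \sum_{z} p(z)r(z)$, sampling $z \sim q^2$ gives
\[
\mathbb{E}_{z\sim q^2}[h(z)] = \frac12\left(m - \frac{\langle q, g_1\, q\rangle}{\langle q, q\rangle}\right).
\]
Because $q = \sum_{k=0}^{l}\mu_k g_k$ lies in $\mathcal V_l \defeq \mathrm{span}\{g_0,\dots,g_l\}$ and multiplication by the real function $g_1$ is self-adjoint, this ratio is the Rayleigh quotient of the compressed operator $M_l \defeq P_{\mathcal V_l}\,(g_1\cdot)\,P_{\mathcal V_l}$ at the coordinate vector of $q$. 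Everything then reduces to identifying $M_l$ with $A^{(m,l)}$ in an orthonormal basis and optimizing.

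Two facts pin down $M_l$. First, by \Cref{lem:girth} the map $f$ is injective on $\mathcal E_l$, so the characters $z^{f(\beta)}$ appearing in $g_0,\dots,g_l$ are pairwise orthogonal; hence the $g_k$ are orthogonal with $\langle g_k, g_k\rangle = 2^n\binom mk$, and the normalized $\hat g_k \defeq g_k/\|g_k\|$ are exactly the Dicke states $\ket{\hat{g_k}}_E$. Second, using $y_e^2 = 1$ one obtains the purely combinatorial three-term recursion
\[
g_1\, g_k = (m-k+1)\,g_{k-1} + (k+1)\,g_{k+1},
\]
valid as an identity of symmetric polynomials in the $y$ variables. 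Combining the two, the off-diagonal entries of $M_l$ in the basis $\{\hat g_k\}$ are
\[
\langle \hat g_{k-1}, g_1\,\hat g_k\rangle = (m-k+1)\frac{\|g_{k-1}\|^2}{\|g_{k-1}\|\,\|g_k\|} = \sqrt{k(m-k+1)} = a_k,
\]
while the diagonal entries $\langle \hat g_k, g_1\,\hat g_k\rangle = (k+1)\langle g_k, g_{k+1}\rangle/\|g_k\|^2$ vanish (the surviving cross terms $\langle g_k, g_{k+1}\rangle$ are zero by orthogonality, modulo a boundary subtlety at $k=l$ addressed below). This exhibits $M_l = A^{(m,l)}$. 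To finish, I would optimize over $q$: the matrix $A^{(m,l)}$ is the weighted adjacency matrix of a path, hence bipartite with zero diagonal, so its spectrum is symmetric about $0$ and $\lambda_{\min}(A^{(m,l)}) = -\lambda_{\max}(A^{(m,l)})$. Minimizing the Rayleigh quotient over $q \in \mathcal V_l$ therefore yields $\lambda_{\min}$, attained at the bottom eigenvector, and
\[
\max_{q}\ \mathbb{E}_{z\sim q^2}[h(z)] = \frac12\left(m - \lambda_{\min}(A^{(m,l)})\right) = \frac12\left(m + \lambda_{\max}(A^{(m,l)})\right),
\]
with the optimizing $\mu_k$ recovered (up to the normalization $\sqrt{\binom mk}$) from the top eigenvector, matching Step \ref{item:q:eigenvector}.

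The \emph{main obstacle} is precisely the boundary term hidden in the diagonal computation: the recursion sends $g_1 g_l$ to $g_{l+1}\notin\mathcal V_l$, so the top diagonal entry is genuinely $(l+1)\langle g_l, g_{l+1}\rangle/\|g_l\|^2$, and this vanishes only when $G$ has no cycle of length at most $2l+1$, i.e.\ girth at least $2l+2$. For even girth $g = 2l+2$ this is automatic and the identity $M_l = A^{(m,l)}$ is exact, but at odd girth $g = 2l+1$ a nonzero correction appears at the $(l,l)$ entry. (For $C_5$, where $l=2$, the uncorrected formula gives $\frac12(5+\sqrt{13}) \approx 4.30$, exceeding the true optimum $4$; restoring the correction returns exactly $4$.) Since this correction is a bounded, rank-one perturbation, it shifts $\lambda_{\max}$ by only $O(1)$ and so is immaterial to \Cref{thm:DQI-needs-high-girth} and \Cref{thm:cycles}, which hinge only on the equivalence $\lambda_{\max}(A^{(m,l)}) = \Theta(m) \Leftrightarrow l = \Theta(m) \Leftrightarrow g = \Theta(m)$; I would handle it either by assuming $g \ge 2l+2$ or by absorbing it into a lower-order error term.
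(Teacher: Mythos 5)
The paper does not actually prove this lemma---it is imported wholesale as a specialization of Lemma 9.2 of Jordan et al.---so there is no in-paper argument to compare against; your derivation is a self-contained Fourier-analytic rendition of essentially the same computation (your orthonormal functions $g_k/\|g_k\|$ play the role of the normalized states in that proof). The core steps are all correct: injectivity of $f$ on $\mathcal{E}_l$ (\Cref{lem:girth}) gives orthogonality of the $g_k$ with $\|g_k\|^2=2^n\binom{m}{k}$, the recursion $g_1g_k=(m-k+1)g_{k-1}+(k+1)g_{k+1}$ is right, and the off-diagonal entries come out to $a_k$ because the cross terms $\langle g_{k-1},g_{k+1}\rangle$ vanish whenever there is no nonempty even subgraph with at most $2l$ edges, which girth at least $2l+1$ guarantees.

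Your ``boundary subtlety'' is not a quibble but a genuine defect in the statement as written. With $l=\lfloor(g-1)/2\rfloor$ as in Step \ref{item:compute-girth} of \Cref{alg:DQI}, odd girth gives $g=2l+1$ exactly, and then $\langle g_l,g_{l+1}\rangle=2^n\binom{2l+1}{l}N_g>0$, where $N_g$ is the number of girth cycles: every weight-$(2l+1)$ element of the cycle space is a single girth cycle, and it splits into disjoint $l$- and $(l+1)$-edge subsets in $\binom{2l+1}{l}$ ways. The compressed operator therefore carries an extra diagonal entry $c=(l+1)\binom{2l+1}{l}N_g/\binom{m}{l}$ in position $(l,l)$, and the true optimum is $\tfrac12\bigl(m-\lambda_{\min}(A^{(m,l)}+cE_{ll})\bigr)\le\tfrac12\bigl(m+\lambda_{\max}(A^{(m,l)})\bigr)$, generically with strict inequality. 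Your $C_5$ check is correct once the correction is computed properly: $c=3$ (note that the ten relevant pairs are all complementary pairs $(S,E\setminus S)$ with $|S|=2$, not only the path splittings), the corrected spectrum is $\{-3,1,5\}$, and the expectation is exactly $4$, whereas the uncorrected formula gives $\tfrac12(5+\sqrt{13})>4$, which no distribution over cuts can achieve. Of your two proposed fixes, decrementing $l$ so that $g\ge2l+2$ is clean and costs nothing asymptotically; the ``absorb into lower-order error'' route needs more care than ``bounded rank-one perturbation'' suggests, since the eigenvalue shift is bounded by $c$ and $c=O(1)$ is not obvious for arbitrary odd-girth graphs where $N_g$ may be large. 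Fortunately the correction is a positive-semidefinite perturbation, so the stated formula is always an upper bound on the true expectation, and \Cref{thm:DQI-needs-high-girth} (which only needs an upper bound) survives unchanged.
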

We show that DQI needs instances with $l = \Omega(m)$ in order to attain a asymptotically nontrivial approximation guarantee.
\begin{lemma}
    For matrices $A^{(m,l)}$ with $l = o(m)$, $\lambda_{\max}(A^{(m,l)}) = o(m)$.
\end{lemma}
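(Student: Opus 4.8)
The plan is to bound the spectral norm of $A^{(m,l)}$ directly from its entries, avoiding any explicit diagonalization. Since $A^{(m,l)}$ is real symmetric, $\lambda_{\max}(A^{(m,l)})$ equals its operator norm, so any upper bound on that norm suffices. First I would control the size of the off-diagonal entries: because $1 \le k \le l$ and, in particular, $m - k + 1 \le m$, we have $a_k = \sqrt{k(m-k+1)} \le \sqrt{l\,m}$ for every $k$ in range. (One can check that $a_k$ is increasing on $1 \le k \le l$ once $l < (m+1)/2$, so the maximum is attained at $k = l$, but the crude bound $a_k \le \sqrt{l\,m}$ is all that is needed.)

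Next I would apply the Gershgorin circle theorem. The diagonal of $A^{(m,l)}$ is identically zero, and each row contains at most two off-diagonal entries, namely $a_{k-1}$ and $a_k$ for an interior row. Hence every Gershgorin disk is centered at the origin with radius at most $a_{k-1} + a_k \le 2\sqrt{l\,m}$, which yields $\lambda_{\max}(A^{(m,l)}) \le 2\sqrt{l\,m}$. Equivalently, one may bound the spectral norm of a symmetric matrix by its maximum absolute row sum, which gives the same estimate; either route is a single line once the entrywise bound is in hand.

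Finally I would convert this into the claimed asymptotic. Dividing through by $m$ gives $\lambda_{\max}(A^{(m,l)})/m \le 2\sqrt{l/m}$, and the hypothesis $l = o(m)$ forces $l/m \to 0$, so the right-hand side tends to $0$. Therefore $\lambda_{\max}(A^{(m,l)}) = o(m)$, as required.

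I expect no serious obstacle here: the whole argument is an entrywise bound followed by Gershgorin. The only point meriting a moment's care is confirming that the coarse estimate $2\sqrt{l\,m}$ is already $o(m)$, which is precisely where the assumption $l = o(m)$ enters—if $l$ were instead $\Theta(m)$ the off-diagonal entries could reach order $m$ and the bound would degrade. A finer analysis via eigenvalue interlacing or the Chebyshev-type recurrence satisfied by the characteristic polynomials of these tridiagonal matrices would sharpen the constant, but none of that is needed for an $o(m)$ conclusion.
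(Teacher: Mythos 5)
Your proof is correct, but it takes a genuinely different and more elementary route than the paper. The paper first factors out $\sqrt{m}$ using nonnegativity of the entries, reducing to the matrix $B^{(l)}$ with off-diagonal entries $\sqrt{k}$, and then identifies $\frac{1}{\sqrt{2}}B^{(l)}$ as the Jacobi matrix of the monic Hermite polynomials, so that $\lambda_{\max}(B^{(l)})$ is the largest Hermite zero, known to be $O(\sqrt{l})$; this yields $\lambda_{\max}(A^{(m,l)}) = O(\sqrt{ml})$. You instead bound every entry by $\sqrt{lm}$ and apply Gershgorin (equivalently, the maximum row-sum bound on the spectral norm of a symmetric matrix) to get $\lambda_{\max}(A^{(m,l)}) \le 2\sqrt{lm}$ directly. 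Both arguments land on the same $O(\sqrt{ml}) = o(m)$ conclusion, and since only an $o(m)$ bound is needed here, nothing is lost. What your approach buys is self-containedness --- no appeal to orthogonal-polynomial asymptotics --- at the cost of a slightly worse constant; what the paper's approach buys is a connection to the Hermite/Jacobi structure of $A^{(m,l)}$ that is also the source of the sharper asymptotics for $\lambda_{\max}$ used elsewhere in the DQI literature when $l = \Theta(m)$. One small note: your parenthetical monotonicity claim for $a_k$ is correct but, as you say, unnecessary; the crude bound $a_k \le \sqrt{lm}$ is all the argument uses.
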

\begin{proof}
Since $A^{(m,l)}$ is nonnegative, $\lambda_{\max}(A^{(m,l)}) \leq \sqrt{m}\lambda_{\max}(B^{(l)})$, where
\begin{align*}
B^{(l)} \;\defeq\;
\begin{bmatrix}
 & \sqrt{1} &              &        &        \\
 \sqrt{1} &               & \sqrt{2} &        &        \\
                 & \sqrt{2} &              & \ddots &        \\
                 &               & \ddots       &        & \sqrt{l} \\
                 &               &              & \sqrt{l} &
\end{bmatrix}.
\end{align*}
The matrix, $\frac{1}{\sqrt{2}}B^{(l)}$ is the $(l+1)\times(l+1)$ Jacobi matrix corresponding to the three-term recurrence relation for the normalized monic Hermite polynomials~\cite[3.5(vi)]{NISTDLMF2025}. The maximum eigenvalue of $\frac{1}{\sqrt{2}}B^{(l)}$ is consequently the largest zero of the degree-$(l+1)$ monic Hermite polynomial, which is $O(\sqrt{l})$~\cite[18.16(v)]{NISTDLMF2025}. Thus $\lambda_{\max}(A^{(m,l)}) = O(\sqrt{ml})$. Since $l = o(m)$, we have $O(\sqrt{ml}) = o(m)$.
\end{proof}

From the two lemmas above, we have:
\begin{theorem}\label{thm:DQI-needs-high-girth} If the expected number of edges cut by DQI in a graph $G \in \mathcal{G}$ is $(\frac{1}{2}+\varepsilon)m$, for some constant $\varepsilon > 0$, the girth $g$ must be linear in $m$.
\end{theorem}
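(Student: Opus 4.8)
The plan is to prove the theorem by chaining the two preceding lemmas through contraposition: a nontrivial constant-factor improvement forces $\lambda_{\max}(A^{(m,l)})$ to be linear in $m$, which the spectral bound forbids unless $l$—and hence the girth—is itself linear in $m$.

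First I would invoke \Cref{lem:DQI-expectation} with the value of $l$ that \Cref{alg:DQI} actually uses, namely $l = \lfloor (g-1)/2 \rfloor$ fixed in Step \ref{item:compute-girth}. The expected number of edges cut is then $\frac{1}{2}(m + \lambda_{\max}(A^{(m,l)}))$. Equating this to the hypothesized value $(\frac{1}{2} + \varepsilon)m$ and solving yields $\lambda_{\max}(A^{(m,l)}) = 2\varepsilon m$. In particular, along the family $\mathcal{G}$ the maximum eigenvalue grows like $\Theta(m)$, so it is certainly not $o(m)$.

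Next I would apply the contrapositive of the immediately preceding lemma, which asserts that $l = o(m)$ implies $\lambda_{\max}(A^{(m,l)}) = o(m)$. Having just shown that $\lambda_{\max}(A^{(m,l)})$ is not $o(m)$, I conclude that $l$ is not $o(m)$ either. Because the standing assumption on $\mathcal{G}$ guarantees that $\lim_{m\to\infty} g/m$ exists and $l = \Theta(g)$, the statement ``$l$ is not $o(m)$'' is equivalent here to $l = \Omega(m)$. Finally I would translate this back to the girth: from $l = \lfloor (g-1)/2 \rfloor$ we have $g = 2l + O(1)$, so $l = \Omega(m)$ gives $g = \Omega(m)$; under the family assumption the limit $\lim_{m\to\infty} g/m$ is then a positive constant, i.e., the girth is linear in $m$.

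The only real obstacle is bookkeeping within the asymptotic framework over the infinite family $\mathcal{G}$: I must make sure every $o(\cdot)$ and $\Omega(\cdot)$ statement is interpreted along the family, so that ``linear'' means precisely $\lim_{m\to\infty} g/m > 0$, and I must check that the floor in $l = \lfloor (g-1)/2 \rfloor$ does not disturb the relation $l = \Theta(g)$. Both points are routine consequences of the assumption that $\lim_{m\to\infty} g/m$ exists, so the substance of the argument really is just the two-step contrapositive above.
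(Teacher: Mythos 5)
Your argument is correct and is essentially the paper's own proof: both chain \Cref{lem:DQI-expectation} with the spectral bound $\lambda_{\max}(A^{(m,l)}) = O(\sqrt{ml})$, and both use the standing assumption that $\lim_{m\to\infty} g/m$ exists to convert ``not $o(m)$'' into ``linear in $m$.'' The only difference is presentational---you run the implication forward and apply the contrapositive of the eigenvalue lemma, whereas the paper assumes the girth is not linear and derives an expected cut of $\frac{1}{2}(1+o(1))m$---which is logically the same argument.
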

\begin{proof}If we do not have linear girth, $g = \Omega(m)$, then $g = o(m)$ since $\lim_{m \to \infty} g/m$ exists. In this case $l \leq \frac{g-1}{2} = o(m)$, and by the above lemmas, the expected number of edges cut by DQI is $\frac{1}{2}(1 + o(1))m$.
\end{proof}
\begin{remark}
The approximation guarantee for DQI is expressed with respect to the trivial upper bound, $m$ on the optimal cut since DQI does not leverage more refined upper bounds such as those arising from convex programs. However, this means that DQI provides a lower bound on the number of edges in a maximum cut and more generally on the number of satisfiable constraints in a constraint satisfaction problem, even when DQI is not guaranteed to run in polynomial time.   
\end{remark}

\section{Classical solvability of high-girth instances}
\label{sec:classical}
High girth has been previously exploited for solving \maxcut{} both classically and using the QAOA \cite{ElAlaouiEtAl2023LocalAlgorithms,ThompsonEtAl2022HighGirthMaxCut,FarhiEtAl2025QAOAHighGirth}; however, ``high'' in such contexts typically refers to constant or perhaps logarithmic girth, as expected in some kinds of random graphs. In contrast, in order for DQI to guarantee $(\frac{1}{2}+\varepsilon)m$ edges in expectation, we know that we must have girth $g = \Theta(n) = \Theta(m)$. We show that in this regime, \maxcut{} can be solved exactly in polynomial time. In fact, the approaches we describe can also solve weighted instances, while it is nontrivial to extend DQI to handle weighted instances.

\begin{lemma} \label{lem:tree-decomp}
Suppose $G \in \mathcal{G}$ is a connected graph with girth $g = \Omega(m)$. Then the vertices of $G$ can be partitioned into vertex-disjoint trees $T_1,\ldots,T_t$ where: (i) $t = O(1)$, and (ii) there is at most one edge in $G$ between any two $T_i$ and $T_j$.
\end{lemma}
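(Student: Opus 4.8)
My plan is to reduce to the part of $G$ that actually carries cycles, show that this part has bounded size, and then cut it into induced subtrees by hand. Write $k \defeq m-n+1$ for the cyclomatic number. The first step is to argue $k=O(1)$: any family of edge-disjoint cycles uses at least $g$ edges per cycle, so at most $m/g = O(1)$ of them fit; since the class of cycles has the edge variant of the Erd\H{o}s--P\'osa property, a bounded edge-disjoint packing forces a bounded feedback edge set, and the minimum feedback edge set of a connected graph has size exactly $m-n+1=k$. (This is precisely \Cref{thm:cycles}, which I may instead simply cite.)

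Next I would extract the skeleton. Pass to the $2$-core $G'$ by repeatedly deleting degree-$1$ vertices; each deletion drops $n$ and $m$ together, so $G'$ is connected with cyclomatic number $k$, and $G$ is recovered from $G'$ by attaching a rooted forest, one hanging tree $D_u$ per vertex $u\in V(G')$, joined to the rest of $G$ only through $u$. Every vertex of $G'$ has degree $\ge 2$, so suppressing the degree-$2$ vertices yields a multigraph $H$ of minimum degree $\ge 3$ whose edges are internally vertex-disjoint paths of $G'$. From $3|V(H)| \le 2|E(H)|$ together with $|E(H)|-|V(H)|+1 = k$ I obtain $|V(H)|, |E(H)| = O(k) = O(1)$: there are only $O(1)$ branch vertices $S$ (the vertices of degree $\ge 3$ in $G'$), joined by only $O(1)$ suppressed paths.

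With this skeleton the partition is explicit. If $G$ is a tree take $t=1$. If $G'$ is a single cycle (so $S=\emptyset$), break that cycle into three arcs and let each part be an arc together with the hanging trees rooted on it; the three arcs then meet pairwise in exactly one edge. Otherwise, for each branch vertex $v\in S$ take the part $A_v \defeq \{v\}\cup D_v$, and for each suppressed path $P$ take the part consisting of the interior (degree-$2$) vertices of $P$ together with the hanging trees rooted on them, splitting this part at a middle edge into two whenever $P$ is a loop at a single branch vertex. I would then verify the two conclusions: each part induces a tree, since the interior vertices of a suppressed path have degree exactly $2$ in $G'$ (so no chords occur) and the hanging forests are acyclic and attach through a single vertex; the number of parts is $|S| + O(|E(H)|) = O(1)$; and at most one edge joins any two parts, because two distinct path-interiors are nonadjacent, a branch part meets a path-part only in that path's single terminal edge at $v$ (the loop case having been split off), two branch parts meet in at most the single edge $vv'$ as $G$ is simple, and the two halves of a split loop meet only in the cut edge.

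The main obstacle is condition (ii), the at-most-one-edge requirement. The naive partition ``branch vertices as singletons, suppressed paths as trees'' already yields $t=O(1)$ parts that induce trees, but it places \emph{two} edges between a branch part and any loop incident to it, and it breaks down entirely when the cyclic part is a single chordless cycle; the loop-splitting and the three-arc treatment are exactly the adjustments that restore (ii). The only quantitative input is the bound $k=O(1)$, whose proof --- the girth-to-cyclomatic estimate via edge-disjoint cycles --- is the genuinely nonelementary step and is the same fact as \Cref{thm:cycles}, which the present lemma in turn sharpens.
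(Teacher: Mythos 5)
Your argument is correct, but it takes a genuinely different route from the paper's. The paper never touches the $2$-core or the cyclomatic number: it fixes a spanning tree, sets $d \defeq \lfloor \frac{g-3}{4} \rfloor$, and repeatedly slices off the subtree rooted at the depth-$d$ ancestor of the currently deepest leaf. Each slice except the last has depth exactly $d = \Omega(m)$, which gives $t = O(1)$ immediately, and the girth hypothesis is then invoked twice more: a non-tree edge inside one piece would close a cycle of length at most $2d+1 < g$ (so each piece is an induced tree), and two edges between two pieces would close one of length at most $4d+2 < g$ (giving condition (ii)). That argument is elementary and self-contained, and in the paper \Cref{thm:cycles} is \emph{derived from} this lemma. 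This is the one point where you must be careful: in the paper's logical order, citing \Cref{thm:cycles} here would be circular, so you are obliged to rely on your independent derivation of $k = m-n+1 = O(1)$ --- at most $m/g = O(1)$ edge-disjoint cycles, hence bounded cyclomatic number by the Erd\H{o}s--P\'osa packing-versus-feedback-edge-set bound. That derivation is valid, but it imports a nontrivial classical theorem exactly where the paper needs nothing beyond breadth-first search. In exchange, your route is more informative on the combinatorial side: once $k=O(1)$ is in hand, the $2$-core/branch-vertex decomposition (with the three-arc treatment of a chordless cycle and the loop-splitting fix) is a clean structural statement about connected graphs of bounded cyclomatic number, uses the girth nowhere else, and makes explicit why condition (ii) fails for the naive partition. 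Your case analysis for (ii) --- distinct path interiors nonadjacent, branch-to-path parts meeting in a single terminal edge, branch-to-branch in at most one edge by simplicity --- checks out, so both proofs are complete.
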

\begin{proof}
Set $d \defeq \lfloor \frac{g-3}{4} \rfloor$, and find a spanning tree $T$ of $G$. Then run the following algorithm.
\smallbreak
\noindent While $T$ has depth at least $d$:
\begin{enumerate}
\item Find the deepest leaf $u$ in $T$, and let $S_i$ be the subtree rooted at the ancestor of $u$ at distance $d$ from $u$
\item Let $T_i$ be the subgraph of $G$ induced by the vertices of $S_i$
\item Remove $S_i$ from $T$ and set $i = i+1$
\end{enumerate}
When the loop has terminated, set $T_i$ to be subgraph induced by $T$ itself. Let $t \defeq i$ at this point.

We are guaranteed that each tree $S_i$, except possibly $S_t$, has depth exactly $d$, since we picked $u$ to be the deepest leaf in $T$, and therefore in $S_i$, in each iteration. The last tree $S_t$ has depth at most $d$. 

To prove (i), since the trees $S_1,\ldots,S_{t-1}$ have depth $d$, these disjoint trees each have at least $d$ edges. This gives us $t = O(1)$ since $g = \Omega(m)$ implies $d = \Omega(m)$ (the hidden constants for $t$ and $d$ only depend on those for $g$).

Next observe that each $T_i$ can contain no edge of $G-T$ and is consequently a tree. Since each $T_i$ is connected by edges in $T$ and has depth at most $d$, if one contained some edge in $G-T$ there would be a cycle with at most $2d+1 < g$ edges, contradicting that the girth is $g$. The proof of (ii) is similar to this. Two edges between distinct $T_i$ and $T_j$ would create a cycle of length at most $4d+2 < g$.
\end{proof}

\begin{theorem}\label{thm:cycles}
For a connected graph $G \in \mathcal{G}$ with girth $g = \Omega(m)$, the quantity $\mu \defeq m-n+1$ is constant.
\end{theorem}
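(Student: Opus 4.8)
The plan is to read the bound on $\mu$ directly off the tree decomposition supplied by \Cref{lem:tree-decomp}, via a short edge count. First I would recall that for a connected graph $\mu = m-n+1$ is exactly the cyclomatic number of $G$, i.e.\ the number of edges lying outside any spanning tree. In particular $\mu \geq 0$, so it suffices to bound $\mu$ from above by a quantity that does not grow with the instance.

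Next I would invoke \Cref{lem:tree-decomp} to obtain the vertex-disjoint induced trees $T_1,\ldots,T_t$ with $t = O(1)$ and at most one edge of $G$ between any pair $T_i,T_j$. Since the $T_i$ partition $V$, writing $n_i \defeq |V(T_i)|$ gives $\sum_i n_i = n$, and because each $T_i$ is a tree it contributes exactly $n_i - 1$ edges; summing, the edges internal to the trees number $\sum_i (n_i-1) = n - t$.

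Then I would account for the remaining edges. Every edge of $G$ is either internal to some $T_i$ or joins two distinct trees $T_i,T_j$, and since the decomposition is a partition of $V$ with each $T_i$ induced, no edge is counted twice. By property (ii) of \Cref{lem:tree-decomp} the cross edges number at most $\binom{t}{2}$. Combining the two counts,
\[
m \;\leq\; (n - t) + \binom{t}{2},
\]
so that
\[
\mu \;=\; m - n + 1 \;\leq\; \binom{t}{2} - t + 1.
\]
Because $t = O(1)$, the right-hand side is bounded by a constant, and together with $\mu \geq 0$ this shows $\mu$ is constant across the family $\mathcal{G}$.

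The substance of the argument is really carried by \Cref{lem:tree-decomp}; once the decomposition is in hand, all that remains is the edge count, which is routine. The only point to handle carefully is that the $T_i$ are induced subgraphs on a \emph{partition} of $V$, so that each edge of $G$ is classified exactly once as either internal or cross---this is precisely what lets me add the two counts without overcounting. I do not anticipate a genuine obstacle at this stage, since the difficulty has been front-loaded into establishing the bounded number of depth-$d$ subtrees in the preceding lemma.
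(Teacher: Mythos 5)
Your proposal is correct and follows essentially the same route as the paper: both proofs simply invoke \Cref{lem:tree-decomp} and then count, concluding $\mu = O(1)$ from $t = O(1)$ and property (ii). The only cosmetic difference is that the paper counts the non-tree edges directly (observing they must all be cross edges between distinct $T_i, T_j$, hence at most $\binom{t}{2}$ of them), whereas you count all edges and subtract $n-1$; both yield the same constant bound.
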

\begin{proof}
Let $T$ be a spanning tree, and 
construct $T_1,\ldots,T_t$ from $T$ as specified in \Cref{lem:tree-decomp}. Since each $T_i$ is a subgraph of $T$, each edge in $G-T$ must go between distinct $T_i$ and $T_j$. Conditions (i) and (ii) then imply that there is only a constant number of such edges. In other words, $\mu$ is constant.
\end{proof}

\begin{corollary}
    \maxcut{} can be solved in polynomial time on a graph $G \in \mathcal{G}$ with girth $g = \Omega(m)$.
\end{corollary}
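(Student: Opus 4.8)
The plan is to read off \maxcut{} tractability directly from \Cref{thm:cycles}: since $\mu \defeq m-n+1 = O(1)$, a connected $G \in \mathcal{G}$ with $g = \Omega(m)$ is just a spanning tree augmented by a constant number of extra edges, and I would exploit this by reducing the problem to a constant number of tree instances. First I would reduce to the connected case, solving each connected component independently (\maxcut{} decomposes additively across components). This reduction is legitimate because each component $C$ has girth at least the global girth $g = \Omega(m) \geq \Omega(m_C)$, so \Cref{thm:cycles} applies componentwise to give $m_C - n_C + 1 = O(1)$.

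For a connected $G$, I would fix a spanning tree $T$ and let $F \defeq E \setminus T$ be the non-tree edges, so $|F| = \mu = O(1)$ by \Cref{thm:cycles}. Let $W \subseteq V$ collect the endpoints of edges in $F$, so that $|W| \leq 2\mu = O(1)$. The key observation is that once a $\pm 1$ assignment on $W$ is fixed, the contribution of the $F$-edges to the cut value $h(z)$ is completely determined, and the residual task is exactly to maximize the cut over the tree $T$ subject to the vertices of $W$ being pinned to their chosen values.

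The algorithm then enumerates all $2^{|W|} = O(1)$ assignments $\sigma \in \{-1,1\}^W$. For each $\sigma$ it adds the fixed $F$-edge contribution to the optimum of \maxcut{} on $T$ with $W$ pinned according to $\sigma$, and returns the best cut found over all $\sigma$. Correctness follows because an optimal cut of $G$ induces some assignment on $W$, and the enumeration exhausts all of them. \maxcut{} on a tree with a pinned vertex set is solved exactly in linear time by a standard bottom-up dynamic program: root $T$, and for each vertex store the best cut value in its subtree conditioned on each of its two labels (respecting pins), combining children additively. Since only $O(1)$ tree instances are solved, each in polynomial time, the overall procedure is polynomial.

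I expect no serious obstacle; the work is essentially bookkeeping, and the points meriting care are merely verifying that pinning the $F$-endpoints genuinely decouples the $F$-edges from the tree optimization (so the tree program need not reference $F$ at all) and confirming the componentwise reduction respects the $g = \Omega(m)$ hypothesis, as noted above. An even quicker alternative would observe that deleting the $\leq 2\mu$ vertices of $W$ leaves a forest, so $G$ has treewidth at most $2\mu + 1 = O(1)$, whence \maxcut{} is solvable in polynomial (indeed linear) time by the standard treewidth dynamic program; this is one of the ``variety of approaches'' alluded to earlier.
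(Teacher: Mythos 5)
Your proof is correct, but it takes a different (and more self-contained) route than the paper's own proof of this corollary. The paper's proof is a two-line affair: it reduces to connected components exactly as you do, and then simply cites Scott and Sorkin's algorithm, which solves \maxcut{} on a connected graph in time $O(m+n)\min\{2^{m/5},2^{(m-n)/2}\}$; since \Cref{thm:cycles} gives $m-n+1=O(1)$, this is linear time. You instead give an explicit elementary algorithm: enumerate the $2^{|W|}=O(1)$ assignments to the endpoints $W$ of the $\mu$ non-tree edges and run a pinned-vertex tree DP for each. This is essentially the ``dynamic program'' the paper itself sketches informally in the subsequent paragraph on other approaches (the paper enumerates cut/uncut states of the $\mu$ non-tree edges, giving $2^{\mu}$ cases rather than your $2^{2\mu}$ vertex assignments, but both are constant and both decouple correctly since every non-tree edge has both endpoints in $W$). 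What the citation buys is brevity; what your argument buys is a fully self-contained proof, an immediate extension to weighted instances, and---via your treewidth observation, which is also sound since deleting $W$ leaves a subforest of the spanning tree---a cleaner conceptual reason for tractability. One minor point of care in your componentwise reduction: a component may be acyclic (girth undefined/infinite), but then $\mu=0$ for that component and the tree DP applies trivially, so nothing breaks.
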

\begin{proof}
Each component of $G$ must also have girth $\Omega(m)$, so we can apply \Cref{thm:cycles} to each component individually. \maxcut{} can be solved in time $O(m+n)\min\{2^{m/5},2^{(m-n)/2}\}$ on a connected graph~\cite[Theorem 2]{ScottSorkin2003FasterMaxCut}.
\end{proof}

\paragraph{Other approachs for high-girth \maxcut{}.} \Cref{thm:cycles} enables other approaches for solving \maxcut{} in polynomial time. The fundamental cycles of a spanning tree $T$ are the set of $\mu$ unique cycles created by adding each of the $\mu$ non-tree edges to $T$. These cycles form a basis for all cycles over $\mathbb{F}_2^E$ (see \Cref{rem:DQI-decoding}). Since $\mu$ is constant for graphs with linear girth, so are the total number of cycles, which is at most $2^\mu$. This allows solving positive-weighted \maxcut{} as a binary integer program with a constant number of constraints:
\begin{align*}
    \max \qquad &\sum_{ij \in E} w_{ij} x_{ij} \\
    \text{subject to} \quad &\sum_{ij \in C} x_{ij} \leq |C|-1, \; \forall\text{odd cycles }C \\
    &x_{ij} \in \{0,1\},\; \forall ij \in E,
\end{align*}
which can be done in polynomial time~\cite{Papadimitriou1981IntegerProgramming}. The constraints ensure that a bipartite subgraph is selected by preventing picking all the edges of any odd cycle.

Perhaps the simplest way to solve high-girth instances of \maxcut{} is via a dynamic program. Pick a spanning tree $T$. Cutting all the edges of the tree already gives a cut of size $m - O(1)$, since $\mu$ is constant. Enumerate all ways of assigning $\{-1,+1\}$ to the $\mu$ non-tree edges. For each assignment, one can solve a dynamic program that works bottom up from the leaves to find the best cut of the edges of $T$ assuming that some nodes have been fixed by the assignment. This works for weighted \maxcut{} as well.

\section{Discussion} 
\label{sec:discussion}
A natural question is whether some extension of DQI might be able to get around the barriers we presented. One can consider probabilistic decoding approaches for DQI that need not always succeed~\cite{JordanEtAl2024OptimizationDQI,ChaillouxTillich2025SoftDecoders}. Such approaches can conceivably tolerate some cycles of smaller girth, possibly extending the kinds of instances of \maxcut{} that DQI can address. However, we observe that we only need $\mu = O(1)$ for classical solvability, and some of the classical approaches from \Cref{sec:classical} work even when $\mu = \Theta(\log n)$. This, for instance, implies that \maxcut{} on any graph with a logarithmic number of cycles (of any size) can be solved efficiently. Some instances with a polynomial number of cycles may also be solvable, since it is possible to have a polynomial cycles with a fundamental cycle basis of size $\mu = \Theta(\log n)$. Thus classical approaches seem fairly robust in terms of handling cycles, and it is not clear if DQI would be able to do better.

Is \maxcut{} an isolated case, or are there more general ways for classical algorithms to leverage the same structure that DQI does? While the particular property we leverage for \maxcut{} does feel very special, the question is currently tough to answer. The prime candidates for a DQI advantage, such as Optimal Polynomial Intersection, should be further vetted by attempts at designing better classical approximation algorithms in the regimes where running DQI is feasible and efficient.

\section*{Acknowledgements}
I thank Stephen Jordan and Noah Shutty for helpful discussions on DQI and feedback on this work.

This work is supported by the U.S.\ Department of Energy (DOE), Office of Science, National Quantum Information Science Research Centers, Quantum Systems Accelerator (QSA). Additional support is acknowledged from DOE, Office of Science, Accelerated Research in Quantum Computing, Fundamental Algorithmic Research toward Quantum Utility (FAR-Qu). 

This article has been authored by an employee of National Technology \& Engineering Solutions of Sandia, LLC under Contract No.\ DE-NA0003525 with the U.S. Department of Energy (DOE). The employee owns all right, title and interest in and to the article and is solely responsible for its contents. The United States Government retains and the publisher, by accepting the article for publication, acknowledges that the United States Government retains a non-exclusive, paid-up, irrevocable, world-wide license to publish or reproduce the published form of this article or allow others to do so, for United States Government purposes. The DOE will provide public access to these results of federally sponsored research in accordance with the DOE Public Access Plan \url{https://www.energy.gov/downloads/doe-public-access-plan}.

\printbibliography
\end{document}